\title{Rate and Power Allocation for Discrete-Rate Link Adaptation}
\author{Anders~Gjendemsj{\o}, 
    Geir~E.~{\O}ien, 
    Henrik~Holm, 
    Mohamed-Slim~Alouini,\\ 
    David~Gesbert, 
    Kjell~J.~Hole, 
    and P{\aa}l Orten
    \thanks{This paper was presented in part 
    at the IEEE Global Telecommunications Conference, St. Louis, MO, USA, November 2005, and at the IEEE International Conference on Communications, Istanbul, Turkey, June 2006.}
    \thanks{A. Gjendemsj{\o} (contact author, email: gjendems@iet.ntnu.no) and G.~E.~{\O}ien
    are with the Dept.\ of Electronics and Telecommunications, Norwegian
    University of Science and Technology (NTNU), NO-7491 Trondheim, Norway.}
  \thanks{H. Holm was with the Dept. of Electronics and Telecommunications, NTNU, Trondheim, Norway. He is now with Honeywell Laboratories, Minneapolis, MN 55418, USA.}
  \thanks{M.-S.~Alouini is with the Dept. of Electrical and Computer Eng., TAMU-Q, Education City, Doha, Qatar.} 				
  \thanks{D.~Gesbert is with the Institut Eur{\'e}com, F-06904 Sophia-Antipolis, France.}
  \thanks{K. J.~Hole is with the Department of Informatics, University of Bergen, NO-5020 Bergen, Norway.}
  \thanks{P. Orten is with Thrane \& Thrane, NO-1375 Billingstad, Norway and with the University Graduate Center, Oslo, Norway.}}
\begin{document}
\maketitle
\begin{abstract}
Link adaptation, in particular adaptive coded modulation (ACM), is a promising tool for bandwidth-efficient transmission in a fading environment.  The main motivation behind employing ACM schemes is to improve the spectral efficiency of wireless communication systems. In this paper, using a finite number of capacity achieving component codes, we propose new transmission schemes employing constant power transmission, as well as discrete and continuous power adaptation, for slowly varying flat-fading channels.
 
We show that the proposed transmission schemes can achieve throughputs close to the Shannon limits of flat-fading channels using only a small number of codes. Specifically, using a fully discrete scheme with just four codes, each associated with four power levels,	we achieve a spectral efficiency within $1\,\dB$ of the continuous-rate continuous-power Shannon capacity. Furthermore, when restricted to a fixed number of codes, the introduction of power adaptation has significant gains with respect to ASE and probability of no transmission compared to a constant power scheme.
\end{abstract}



\section{Introduction}
\label{sec: introduction}
In wireless communications bandwidth is a scarce resource. By employing link adaptation, in particular adaptive coded modulation (ACM), we can achieve bandwidth-efficient transmission schemes. Today, adaptive schemes are already being implemented in wireless systems such as Digital Video Broadcasting - Satellite Version 2 (DVB-S2)~\cite{DVBS2}.
A generic ACM system
\cite{goldsmith/chua97,
  goldsmith/chua98,
  hole/holm/oien00,
  hole/oien01,
  chung/goldsmith01,
  hanzo/wong/yee02,
  holm02,
  catreux/erceg/gesbert/heath02,
  torrance/hanzo96}
is illustrated in Fig.~\ref{fig: system model}.
Such a system adapts to the channel variations by utilizing a set of component channel codes and modulation constellations with different spectral efficiencies (SEs).%

We consider a wireless channel with additive white Gaussian noise (AWGN) and
fading. Under the assumption of slow, frequency-flat fading, a block-fading model can be used to approximate the wireless fading channel by an AWGN channel within the length of a codeword~\cite{mceliece/stark84, ozarow/shamai/wyner94}.
Hence, the system may use codes which typically guarantee a certain spectral efficiency within a range of signal-to-noise ratios
(SNRs) on an AWGN channel.  At specific time instants, a prediction of the instantaneous
SNR is utilized to decide the highest-SE code that can be used. 
The system thus compensates for periods with low SNR by transmitting at a low SE, while transmitting at a high SE when the SNR is favorable.  In this way, a significant overall gain in \emph{average spectral efficiency}
(ASE)---measured in information bits/s/Hz--- 
can be achieved compared to fixed rate transmission systems.

\begin{figure}[tbp]
  \centering
    \includegraphics[width=3.2in]{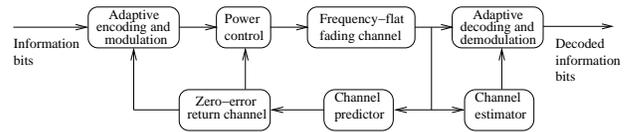}
    \caption{Adaptive coded modulation system.}
  \label{fig: system model}
\end{figure}


In the current literature we can identify two main approaches to the design of adaptive systems with a finite number of transmission rates~\cite{goldsmith/chua97, kose/goeckel00, paris/aguayo-torres/entrambasaguas01, byoungjo/hanzo03, lin/yates/spasojevic03, kim/skoglund05,kim/skoglund07}. One key point is the starting point for the design. In~\cite{lin/yates/spasojevic03, kim/skoglund05, kim/skoglund07} the problem can be stated as follows: Given that the system quantizes any channel state to one of $L$ levels, what is the maximum spectral efficiency that can be obtained using discrete rate signalling? On the other hand, in~\cite{goldsmith/chua97, kose/goeckel00, paris/aguayo-torres/entrambasaguas01, byoungjo/hanzo03} the question is: Given that the system can utilize $N$ transmission rates, what is the maximum spectral efficiency?
Another key difference is that
in~\cite{goldsmith/chua97, kose/goeckel00, paris/aguayo-torres/entrambasaguas01, byoungjo/hanzo03} the system is designed to maximize the average spectral efficiency according to a \textit{zero information outage} principle, such that at poor channel conditions transmission is disabled, and data buffered. 
However, in~\cite{lin/yates/spasojevic03, kim/skoglund05,kim/skoglund07}, data are allowed to be transmitted at all time instants, and an \textit{information} outage occurs when the mutual information offered by the channel is lower than the transmitted rate.
While seemingly similar, these approaches actually leads to different designs as will be demonstrated. Though allowing for a non-zero outage can offer more flexibility in the design, it also comes with the drawbacks of losing data and spilling system resources (e.g., power).
Furthermore, in~\cite{lin/yates/spasojevic03, kim/skoglund05,kim/skoglund07} the important issues of how often data are lost due to an information outage, and how to deal with it are not discussed, e.g., many applications would then require the communication system to be equipped with a retransmission capability.
These differences render a fair comparison between the approaches difficult; however we provide a numerical example later to illustrate the key points above.

In~\cite{lin/yates/spasojevic03, kim/skoglund05,kim/skoglund07} adaptive transmission with a finite number of capacity-achieving codes, and a single power level per code are considered. However, from previous work by Chung and Goldsmith~\cite{chung/goldsmith01}
we know that the spectral efficiency of such a restricted adaptive system increases if more degrees of freedom are allowed. In particular, for a finite number of transmission rates, power control is expected to have a significant positive impact on the system performance, and hence in this paper we propose and analyze more flexible power control schemes, for which the single power level per code scheme of~\cite{lin/yates/spasojevic03, kim/skoglund05,kim/skoglund07} can be seen as a special case.

\begin{figure}[tbp]
  \centering
    \includegraphics[width=3.2in]{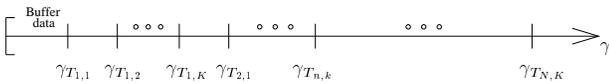}
  \caption{The pre-adaptation SNR range is partitioned into regions where $\swl{n,k}$~are the switching thresholds.}
  \label{fig: Fading axis bins}
\end{figure}

In this paper we focus on data communications which, as emphasized in~\cite{goldsmith05}, cannot ``tolerate any loss''. For such applications it thus seems more reasonable to follow the zero information outage design philosophy of~\cite{goldsmith/chua97, kose/goeckel00,paris/aguayo-torres/entrambasaguas01, byoungjo/hanzo03}. This choice is also supported by the work done in the design of adaptive coding and modulation for real-life systems, e.g., in DVB-S2~\cite{DVBS2}.  Based on this philosophy we derive transmission schemes that are optimal with regard to maximal ASE for a given fading distribution. By the assuming codes to be operating at AWGN channel capacity,
we formulate constrained ASE maximization problems and proceed to find the optimal switching thresholds and power
control schemes as their solutions. Considering both constant power transmission
as well as discrete and continuous power adaptation, we show that the introduction of power adaptation
provides substantial average spectral efficiency and probability  of no transmission gains when the number
of rates is finite.  Specifically, spectral efficiencies within $1\,\dB$ of the continuous-rate continuous-power Shannon capacity are obtained using a completely discrete transmission scheme with only four codes and four power levels per code.

The remainder of the present paper is organized as follows. We introduce the wireless model under investigation and describe the problem under study in Section~\ref{sec: system-model-problem}.
Optimal transmission schemes for link adaptation
are derived and analyzed in Section~\ref{sec: optimal-design-masa}. 
Numerical examples and plots are presented in Section~\ref{sec: numerical-exampl-results}.
Finally, conclusions and discussions are given in Section~\ref{sec: conclusions}.

\section{System Model and Problem Formulation}
\label{sec: system-model-problem}


\subsection{System Model}
\label{sec:system-model}

We consider the single-link wireless system
depicted in Fig.~\ref{fig: system model}. The discrete-time channel
is a stationary fading channel with time-varying
gain. The fading is assumed to be slowly varying and frequency-flat.
Assuming, as in~\cite{goldsmith/chua97, goldsmith/varaiya97}, that the transmitter receives perfect channel predictions
we can adapt the transmit power instantaneously at time $i$ according to a power adaptation scheme $S(\cdot)$.
Then, denote the instantaneous \emph{pre-adaptation} received signal-to-noise ratio (SNR) by
$\g[i]$, and the average pre-adaptation received SNR by $\gb$. These are the SNRs that would be experienced using signal
constellations of average power $\Sb$ without power control~\cite{hole/holm/oien00}.
Adapting the transmit power based on the channel state $\g[i]$, the received SNR after power control, termed \emph{post-adaptation} SNR, at time $i$ is then given by
$\g[i]S(\g[i])/\Sb$. By virtue of the stationarity assumption, the distribution of
$\g[i]$ is independent of $i$, and is denoted by $f_{\g}(\g)$.
To simplify the notation we omit the time reference $i$ from now on.

Following~\cite{gjendemsjo/oien/orten06,goldsmith/chua97},
we partition the range of $\g$ into $NK+1$ pre-adaptation SNR regions,
which are defined by the switching thresholds
$\lbrace\swl{n,k}\rbrace_{n,k=1,1}^{N,K}$, as illustrated in
Fig.~\ref{fig: Fading axis bins}.
Code $n$, with spectral efficiency $R_n$, is selected whenever $\g$ is in the interval $[\swl{n,1},\swl{n+1,1})$, $n=1,\cdots,N$.
Within this interval the transmission rate is constant, but the system can adapt the transmitted power, subject to an average power constraint of $\Sb$, to the channel conditions in order to maximize the average spectral efficiency.
If the pre-adaptation SNR is below $\swl{1,1}$, data are buffered. For convenience, we let $\swl{0,1}=0$ and $\swl{N+1,1}=\infty$.


\subsection{Problem Formulation}
The capacity of an AWGN channel is well known to be $C(\g)=\log_2\bigl(1+\frac{S(\g)}{\Sb}\g\bigr)$ information bits/s/Hz, where $\frac{S(\g)}{\Sb}\g$ is the received SNR\@.  This means that
there exist codes that can transmit with arbitrarily small error rate at all
spectral efficiencies up to $C(\g)$ bits/s/Hz, provided that the received SNR is (at least)~$\frac{S(\g)}{\Sb}\g$.
Our goal is now to find an optimal set of capacity-achieving transmission rates, switching levels, and power adaptation schemes in order to maximize the average spectral efficiency for a given fading distribution.

Using the results of~\cite{lin/yates/spasojevic03}, an information outage can only occur for a set of channel states within the first interval, which in our setup corresponds to that data should only be buffered for channel states in the first interval.  Whereas in the other SNR regions, the assigned rate supports the worst channel state of that region. The average spectral efficiency of the system (in information bit per channel use) can then be written as
\begin{equation}
\strek{R} = \sum_{n=1}^N R_n P_n,
\end{equation}
where $P_n=\int_{\swl{n,1}}^{\swl{n+1,1}}f_{\g}(\g) \, \dif \g$, which is the probability that code $n$ is used.

\section{Optimal Design for Maximum Average Spectral Efficiency}
\label{sec: optimal-design-masa}
Based on the above setup, we now proceed to design spectral efficiency maximizing schemes.
Recall that the pre-adaptation SNR range is divided into regions lower bounded
by $\swl{n,1}$, for $n=0,1,\cdots,N$.
Thus, we let $R_n = C_n$, where $C_n=\log_2\bigl(1+\frac{S(\swl{n,1})}{\Sb}\swl{n,1}\bigr)$
is shown below to be the highest spectral efficiency that can be supported within the range $[\swl{n,1},\swl{n+1,1})$ for
$1\leq n \leq N$, after transmit power adaptation. Note that the fading is nonergodic within each codeword, so that the results
of~\cite[Section IV]{caire/shamai99} do not apply.

An upper bound on the ASE of the ACM scheme---for a given set of
codes/switching levels---is therefore
the \emph{maximum ASE for ACM} (MASA), defined as
\begin{equation}
\begin{split}
  \MASA &= \sum_{n=1}^{N} C_n P_n\\
            &= \sum_{n=1}^N \log_2 \bigl( 1 + \frac{S(\swl{n,1})}{\Sb}\swl{n,1} \bigr )
         \int_{\swl{n,1}}^{\swl{n+1,1}} \f{\g}\,\dif \g,
\end{split}
  \label{eq: tot-ase}
\end{equation}
subject to the average power constraint,
\begin{equation}
\label{eq: Power constraint for MASA - Start} \sum_{n=0}^{N}
\int_{\swl{n,1}}^{\swl{n+1,1}}S(\g)f_{\g}(\g)\, \dif\g
\leq \Sb,
\end{equation}
where $\Sb$ denotes the average transmit power. \eqref{eq: tot-ase} is basically a
discrete-sum approximation of the
integral expressing the Shannon capacity in~\cite[Eq. (4)]{goldsmith/varaiya97}.
If arbitrarily long codewords can be used, the bound can
be approached from below with arbitrary precision for an
arbitrarily low error rate\@. Using $N$ distinct codes we analyze the MASA for constant,
discrete, and continuous transmit power adaptation schemes, deriving the optimal
rate and power adaptation for maximizing the average spectral efficiency.
We shall assume that the fading is so slow that capacity-achieving
codes for AWGN channels can be employed, giving tight bounds
on the MASA~\cite{dolinar/divsalar/pollara98-I,dolinar/divsalar/pollara98-II}.





\subsection{Continuous-Power Transmission Scheme}
\label{ssec:Optimal power adaptation}
In an ideal adaptive power control scheme, the transmitted power can be varied to entirely track the channel variations. Then, for the $N$ regions where we transmit, we show that the optimal continuous power adaptation scheme is \emph{piecewise channel inversion\footnote{The results of this section were in part presented in~\cite{gjendemsjo/oien/holm05}. Similar results on continuous power adaptation, when allowing for information outage (loss of data), were also later independently reported in~\cite{lin/yates/spasojevic06}.}} to keep the
received SNR constant within each region, much like the bit error rate is kept constant in optimal adaptation for constellation
restrictions in \cite{goldsmith/chua97}. For each rate region we use a capacity-achieving code
which ensures an arbitrarily low probability of error for any AWGN channel with a received SNR greater than or
equal to $\frac{S(\swl{n,1})}{\Sb}\swl{n,1}\triangleq\kappa_n$.
The optimality of this strategy is formally proven below.

\begin{lemma}
\label{lemma: optimal continuous power}
For the $N+1$ SNR regions the optimal
continuous power control scheme is of the form

\begin{equation}
\label{eq:New power adaptation scheme}
\frac{S(\g)}{\Sb} =
\begin{cases}
\frac{\kappa_n}{\g},
& \text{if } \swl{n,1}\leq \g <\swl{n+1,1},
\, 1\leq n \leq N,\\
0, & \text{if } \g < \swl{1,1},
\end{cases}
\end{equation}
where $\lbrace\kappa_n,\swl{n,1}\rbrace_{n=1}^N$ are parameters to be
optimized.
\end{lemma}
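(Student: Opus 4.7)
The plan is to exploit a separation: the MASA objective depends on the power allocation only through the endpoint values $S(\swl{n,1})$, while the power constraint depends on $S(\gamma)$ over the entire region. Fix any set of switching levels $\{\swl{n,1}\}$ and any rates $R_n$ that will be transmitted; I then argue that, region by region, the piecewise-channel-inversion shape is the unique pointwise minimizer of the power integrand subject to the decoding-reliability constraint.

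First, I would note that on $[\swl{n,1},\swl{n+1,1})$ the rate $R_n$ is constant, and for a capacity-achieving AWGN code of rate $R_n$ to decode reliably at every realization $\g$ in the region, the post-adaptation SNR must satisfy $S(\g)\g/\Sb \geq 2^{R_n}-1$ throughout the region. The worst case occurs at the lower endpoint $\g=\swl{n,1}$, so the rate that can be supported is at most $R_n = \log_2\bigl(1+S(\swl{n,1})\swl{n,1}/\Sb\bigr) = \log_2(1+\kappa_n)$, and this bound is achieved precisely when $S(\g)\g/\Sb \geq \kappa_n$ for all $\g \in [\swl{n,1},\swl{n+1,1})$, i.e.\ $S(\g) \geq \Sb \kappa_n/\g$.

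Next, observe that once $\{\kappa_n\}$ and $\{\swl{n,1}\}$ are fixed, the objective \eqref{eq: tot-ase} is fixed, so among all feasible $S(\cdot)$ we should choose the one that consumes the least average power (slack in the constraint \eqref{eq: Power constraint for MASA - Start} can then be redistributed by reoptimizing $\{\kappa_n, \swl{n,1}\}$ to increase the ASE). Because the power integrand $S(\g)f_{\g}(\g)$ is minimized pointwise, subject to $S(\g) \geq \Sb\kappa_n/\g$, by $S(\g) = \Sb\kappa_n/\g$, this choice minimizes the contribution of region $n$ to the power constraint. For the no-transmission region $[0,\swl{1,1})$ we have $R_0=0$, which imposes no lower bound on $S(\g)$, so the pointwise minimum is $S(\g)=0$. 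Combining these gives exactly the piecewise expression in \eqref{eq:New power adaptation scheme}.

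The only delicate point is the opening argument that $R_n$ cannot exceed $\log_2(1+\kappa_n)$: for capacity-achieving codes on a slowly varying channel treated as AWGN within a codeword, transmitting at rate higher than the instantaneous capacity at any channel state in the region produces a decoding failure, so this is forced by the zero-information-outage design assumption stated in the problem formulation. Everything else is a pointwise minimization. I would end the proof by remarking that the resulting post-adaptation SNR is constant and equal to $\kappa_n$ on each active region, so that each region effectively looks like an AWGN channel of SNR $\kappa_n$, which is consistent with the use of a single capacity-achieving code per region.
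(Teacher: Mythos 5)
Your proof is correct and follows essentially the same route as the paper: both arguments reduce to showing that $S(\g)=\Sb\kappa_n/\g$ is the pointwise-minimal power profile compatible with reliably supporting the rate $\log_2(1+\kappa_n)$ throughout region $n$ (the paper phrases this as a contradiction at a hypothetical point $\g'$ where less power could be used, you phrase it as direct pointwise minimization subject to $S(\g)\geq \Sb\kappa_n/\g$). Your added remarks --- that the objective depends on $S(\cdot)$ only through the per-region worst-case post-adaptation SNR, and that any power saved can be reinvested by reoptimizing $\{\kappa_n,\swl{n,1}\}$ --- are a welcome tightening of the same argument rather than a different approach.
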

\begin{proof}
Assume for the purpose of contradiction that the power scheme given
in~\eqref{eq:New power adaptation scheme} is not optimal, i.e., it
uses too much power for a given rate. Then, by assumption, there
exists at least one point in the set
\begin{equation}
\bigcup_{n=1}^N\lbrace \g: \swl{n,1} \leq \g
<\swl{n+1,1}\rbrace
\end{equation}
where it is possible to use less power; denote this point by
$\g'$. Fix any $\epsilon>0$ and let
$\frac{S(\g')}{\Sb}=\frac{\kappa_n}{\g'}-\epsilon$.
This yields a received SNR of
$\kappa_n-\epsilon \g'<\kappa_n$, but is less than the
minimum required SNR for a rate of $\text{log}_2(1+\kappa_n)$. Hence,
it does not exist any point where the proposed power scheme can
be improved, and the assumption is contradicted.
\end{proof}

Using~\eqref{eq:New power adaptation scheme} the received SNR, after power adaptation, is then
for $n=1,2,\cdots,N$ given as:
\begin{equation}
\label{eq:SNR for new scheme}
\frac{S(\g)}{\Sb}\g =
\begin{cases}
\kappa_n,
& \text{ if } \swl{n,1}\leq \g <\swl{n+1,1},\\
0, & \text{ if } \g < \swl{1,1},
\end{cases}
\end{equation}
i.e., we have a constant received SNR of $\kappa_n$ within each region, supporting a
maximum spectral efficiency of $\log_2(1+\frac{S(\swl{n,1})}{\Sb}\swl{n,1})=\log_2(1+\kappa_n)$.

Introducing the power adaptation scheme~\eqref{eq:New power adaptation scheme} in~\eqref{eq: tot-ase},~\eqref{eq: Power constraint for MASA - Start}, and changing the average power inequality to an equality for maximization, we arrive at a scheme we denote $\MASANI$, posing the
following optimization problem\footnote{Strictly speaking, we should add the constraints $0\leq \swl{1,1}\leq \dots \leq \swl{N,1}$, and $\kappa_n\geq 0,\forall n$. However, we instead verify that the solutions we find satisfy these constraints.}:

\begin{subequations}\label{eq: continuous optimization problem}
\begin{align}
&\text{maximize}
\quad \MASANI = \sum_{n=1}^{N}\text{log}_2(1+\kappa_n)w_n\\
&\quad \quad
\text{s.t}
\quad \,\quad \sum_{n=1}^{N}\kappa_n c_n = 1,
\label{eq: contin pow constr}
\end{align}
\end{subequations}
where we have introduced the notation $w_n=\int_{\swl{n,1}}^{\swl{n+1,1}} f_\g(\g)\,\dif \g$, $c_n =\int_{\swl{n,1}}^{\swl{n+1,1}}\frac{1}{\g}f_{\g}(\g)\,\dif \g$.
Note that for $N=1$, \eqref{eq: continuous optimization problem} reduces to the truncated channel inversion Shannon capacity scheme given in~\cite[Eq. 12]{goldsmith/varaiya97}.
Inspecting~\eqref{eq: continuous optimization problem}, we see that for a given set of $\{\swl{n,1}\}$, the problem is a standard convex optimization problem in $\{\kappa_n\}$ with a waterfilling solution given as~\cite{cover/thomas91}
\begin{equation}
\label{eq: solution for beta continuous}
\kappa_n = \frac{w_n}{\lambda c_n}-1,\, n=1,\cdots,N,
\end{equation}
where $\lambda$ is a Lagrange multiplier to satisfy the average power constraint, and can from~\eqref{eq: contin pow constr} be expressed as a function of the switching thresholds as
\begin{equation}
\lambda = \frac{1-\CDF{\swl{1,1}}}{1+\sum_{n=1}^N c_n},
\end{equation}
where $F_{\g}(\cdot)$ denotes the cumulative distribution function (cdf) of $\g$.  Finally, the optimal values of $\{\swl {n,1}\}$, are found by equating the gradient of MASA$_{N\times \infty}$ to zero,
\begin{equation}
\label{eq: MASA cont all inserted}
\nabla \text{MASA}_{N\times \infty} = \nabla \sum_{n=1}^N\log_2(\frac{w_n}{\lambda c_n}) = \vek{0}.
\end{equation}

The solution to~\eqref{eq: MASA cont all inserted} is found through numerical procedures, reducing the $2N$-dimensional constrained optimization problem to solving $N$ equations in $N$ unknowns. Numerical results for the resulting adaptive power policy and the corresponding spectral efficiencies are presented in
Section~\ref{sec: numerical-exampl-results}.


\subsection{Discrete-Power Transmission Scheme}
\label{ssec: Discrete-Power Transmission Scheme}
For practical scenarios the resolution of power control will be limited, e.g., for the Universal Mobile Telecommunications System (UMTS) power control step sizes on the order of $1\,\dB$ are proposed~\cite{3GphysTDD}.
We thus extend the MASA analysis by considering discrete power adaptation.
Specifically, we introduce the MASA$_{N\times K}$ scheme
where we allow for $K\geq 1$ power regions \emph{within} each of the $N$ rate regions.
For each rate region we again use a capacity-achieving code
for any AWGN channel with a received SNR greater than or
equal to $\frac{S(\swl{n,1})}{\Sb}\swl{n,1}=\kappa_n$.
The optimal discrete power adaptation is discretized piecewise channel inversion,
closely related to the discrete power scheme in~\cite{paris/aguayo-torres/entrambasaguas01}.

\begin{lemma}
\label{lemma: optimal discrete power}
The optimal discrete-power adaptation scheme is of the form
\begin{equation}
\label{eq: Power adaptation scheme - N x K}
\frac{S(\g)}{\Sb} = 
\begin{cases}
\frac{\kappa_{n}}{\swl{n,k}}, &
 \text{if } \swl{n,k}\leq \g <\swl{n,k+1},\\
  & 1\leq n \leq N, \text{ } 1\leq k \leq K\\
0, & \text{if } \g < \swl{1,1},
\end{cases}
\end{equation}
where $\swl{n,K+1} \triangleq \swl{n+1,1}$. $\lbrace \kappa_n \rbrace_{n=1}^N$ and $\lbrace \swl{n,k}
\rbrace_{n,k=1,1}^{N,K}$ are the parameters to be optimized.
\end{lemma}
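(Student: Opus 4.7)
The plan is to mirror the contradiction argument used in Lemma~\ref{lemma: optimal continuous power}, adapting it to the piecewise-constant nature of discrete power adaptation. First I would identify the structural feasibility constraint: within each rate region $[\swl{n,1},\swl{n+1,1})$ the system uses a single capacity-achieving code of spectral efficiency $\log_2(1+\kappa_n)$, so reliable decoding requires the post-adaptation SNR to satisfy $\frac{S(\g)}{\Sb}\g \geq \kappa_n$ at every $\g$ in the region. By construction of the discrete-power scheme the transmit power $S(\g)$ is held constant on each power sub-region $[\swl{n,k},\swl{n,k+1})$, so the binding channel state is the left endpoint $\swl{n,k}$, at which the SNR requirement reduces to $\frac{S(\g)}{\Sb} \geq \frac{\kappa_n}{\swl{n,k}}$ throughout that sub-region.

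Next I would argue that this lower bound must be attained with equality at the optimum, via the same reductio ad absurdum as in Lemma~\ref{lemma: optimal continuous power}. Suppose, for contradiction, that an optimal discrete scheme uses strictly less power on some sub-region $[\swl{n,k},\swl{n,k+1})$, namely $\frac{S(\g)}{\Sb} = \frac{\kappa_n}{\swl{n,k}} - \epsilon$ for some $\epsilon>0$. Then at $\g=\swl{n,k}$ the post-adaptation SNR equals $\kappa_n - \epsilon\,\swl{n,k} < \kappa_n$, which is strictly below the AWGN threshold for reliably decoding a rate-$\log_2(1+\kappa_n)$ code, contradicting the assumption that this rate is supported throughout the rate region. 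Hence no such reduction is feasible, and the proposed form~\eqref{eq: Power adaptation scheme - N x K} is optimal on every active sub-region. The case $\g<\swl{1,1}$ is handled exactly as in Lemma~\ref{lemma: optimal continuous power}: no code is transmitted there, so $S(\g)/\Sb=0$ is trivially optimal and contributes nothing to the average-power budget~\eqref{eq: Power constraint for MASA - Start}.

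The one place where care is needed is recognizing that, because the power is held constant across a whole sub-region, the worst-case channel gain is the left endpoint $\swl{n,k}$ rather than the running $\g$ used in Lemma~\ref{lemma: optimal continuous power}; this is precisely what replaces the continuous channel-inversion factor $1/\g$ with the piecewise-constant factor $1/\swl{n,k}$, and is the only substantive difference from the earlier proof. No new analytical machinery is required, and the further optimization over the free parameters $\{\kappa_n\}_{n=1}^N$ and $\{\swl{n,k}\}_{n,k=1,1}^{N,K}$ is outside the scope of the lemma and is deferred to the remainder of the subsection.
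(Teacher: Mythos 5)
Your proof is correct and follows essentially the same route as the paper's: both reduce to the observation that, with the power held constant on each sub-region $[\swl{n,k},\swl{n,k+1})$, the binding (worst-case) channel state is the left endpoint, so the least feasible power supporting the rate $\log_2(1+\kappa_n)$ is exactly $\kappa_n/\swl{n,k}$, established by the same contradiction argument as in Lemma~\ref{lemma: optimal continuous power}. Your version merely spells out more explicitly the step the paper summarizes as ``reducing the power in a stepwise manner ... at each step obtaining a received SNR of $\kappa_n$''.
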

\begin{proof}
To ensure reliable transmission in each rate region $1\leq n \leq
N$, we require $\frac{S(\g)}{\Sb} \g \geq \kappa_n$, assuming $\g \in
[\swl{n,1}, \swl{n+1,1})$. Thus, following the proof of
Lemma~\ref{lemma: optimal continuous power}, since the rate is
restricted to be constant in each region, it is obviously optimal
from a capacity maximization perspective to reduce the transmitted
power, when the channel conditions are more favorable. \eqref{eq:
Power adaptation scheme - N x K} is then obtained by reducing the
power in a stepwise manner ($K-1$ steps), and at each step obtaining a received SNR of $\kappa_n$, i.e., $\frac{S(\swl{n,k})}{\Sb}\swl{n,k}=\kappa_n$, thus using the least possible power, while still ensuring transmission with an arbitrarily low error rate.
\end{proof}

Compared to the continuous-power transmission scheme~\eqref{eq:New power adaptation scheme}, discrete-level power control~\eqref{eq: Power adaptation scheme - N x K} will be suboptimal.
As seen from the proof of Lemma~\ref{lemma: optimal discrete power}, this is due to fact that~\eqref{eq: Power adaptation scheme - N x K} is only optimal at $K$ points ($\swl{n,1},\cdots,\swl{n,K}$) within each pre-adaptation SNR region $n$, at all other points the transmitted power is greater than what is required for reliable transmission
at $\log_2(1+\kappa_n)$ bits/s/Hz. Clearly, increasing the number of power levels per code $K$ gives a better approximation to the continuous power control~\eqref{eq:New power adaptation scheme}, resulting in a higher
average spectral efficiency. However, as we will see from the numerical results in
Section~\ref{sec: numerical-exampl-results}, using only a few power levels per code will yield spectral efficiencies
close to the upper bound of continuous power adaptation.

Using~\eqref{eq: Power adaptation scheme - N x K}
in~\eqref{eq: tot-ase},~\eqref{eq: Power constraint for MASA -
Start},
we arrive at the following optimization problem: 

\noindent

\begin{subequations}
\label{eq: MASA N x K - power scheme inserted}
\begin{align}
\text{maximize}\quad \sum_{n=1}^N \log_2(1+\kappa_{n}) w_n\\
\text{s.t.}\qquad \sum_{n=1}^N \kappa_n d_{n} = 1,\label{eq: constraint n k}
\end{align}
\end{subequations}
where we have introduced $d_{n}= \sum_{k=1}^{K} \frac{1}{\swl{n,k}} \int_{\swl{n,k}}^{\swl{n,k+1}} f_\g(\g)\, \dif\g$. As in the case of continuous-power transmission, for fixed $\{\swl{n,k}\}$,~\eqref{eq: MASA N x K - power scheme inserted} is a standard convex optimization problem in $\lbrace \kappa_n\rbrace$, yielding optimal values according to waterfilling as
\begin{equation}
\label{eq: solution for beta discrete}
\kappa_n = \frac{w_n}{\lambda d_{n}}-1,\, n=1,\cdots,N,
\end{equation}
where again $\lambda$ is a Lagrange multiplier for the power constraint, and from~\eqref{eq: constraint n k} expressed as
\begin{equation}
\lambda = \frac{1-\CDF{\swl{1,1}}}{1+\sum_{n=1}^N d_{n}}
\end{equation}

Then, the optimal switching thresholds $\{\swl{n,k}\}_{n=1,k=1}^{N,K}$ are found by numerically solving the following equation
\begin{equation}
\label{eq: MASA disc all inserted}
\nabla \text{MASA}_{N\times K} = \nabla \sum_{n=1}^N\log_2(\frac{w_n}{\lambda d_{n}}) = \vek{0},
\end{equation}
again simplifying the original constrained optimization problem to solving a set of equations.



\subsection{Constant-Power Transmission Scheme}
\label{sec: constant-power-trans}
When a single transmission power is used for all codes, we adopt the
term \emph{constant-power transmission scheme}\footnote{Also termed \emph{on-off} power transmission, see e.g.,~\cite{chung/goldsmith01}.}~\cite{kose/goeckel00}.
The optimal constant power policy is then to save power when $\g<\swl{1,1}$,
i.e., when there is no transmission, while transmitting at a power level
satisfying~\eqref{eq: Power constraint for MASA - Start} with an equality for
$\g\geq \swl{1,1}$, i.e.,
\begin{equation}
\label{eq: Constant power adap scheme}
\frac{S(\gamma)}{\Sb} =
\begin{cases}
\frac{1}{1-F_{\g}(\swl{1,1})},
& \text{if } \swl{n,1} \leq \g <\swl{n+1,1},
\, 1\leq n \leq N,\\
0, & \text{if } \g < \swl{1,1}.
\end{cases}
\end{equation}
From~\eqref{eq: Constant power adap scheme}
we see that the post-adaptation SNR monotonically increases
within $[\swl{n,1}, \swl{n+1,1})$, for $1\leq n \leq N$. Hence,
$\log_2 \bigl( 1+ \frac{S(\swl{n,1})}{\Sb}\swl{n,1} \bigr)$ is the highest possible
spectral efficiency that can be supported over the whole of region $n$.
Introducing~\eqref{eq: Constant power adap scheme} in~\eqref{eq: tot-ase}
we obtain a new expression for the MASA, denoted by MASA$_N$:
\begin{equation}
\text{MASA}_N =
\sum_{n=1}^{N}\text{log}_2\Bigl(1+\frac{\swl{n,1}}{1-F_{\g}(\swl{1,1})}\Bigr)
\int_{\swl{n,1}}^{\swl{n+1,1}}f_{\g}(\g) \, \dif\g.
\label{eq:MASA for AWGN fading and constant power}
\end{equation}
In order to find the optimal set of switching levels
$\{\swl{n,1}\}_{n=1}^{N}$, we first calculate the gradient of the MASA$_N$---as
defined by~\eqref{eq:MASA for AWGN fading and constant power}---with respect to the switching
levels. The gradient is then set to zero,
and we attempt to solve the resulting set of equations with respect to
$\{\swl{n,1}\}_{n=1}^N$:
\begin{equation}
  \nabla\MASA_N =
  \begin{bmatrix}
    \frac{\partial\MASA_N}{\partial\swl{1,1}} \\
    \vdots \\
    \frac{\partial\MASA_N}{\partial\swl{N,1}}
  \end{bmatrix}
  = \vek{0}.
  \label{eq:MASA-gradient}
\end{equation}

For $n=2,\cdots,N$ the partial derivatives in~\eqref{eq:MASA-gradient} can be expressed as follows:
\begin{equation}
  \begin{split}
    &\frac{\partial\MASA_N}{\partial\swl{n,1}}
      =\log_2(e)\times   \Biggl(\frac{\int_{\swl{n,1}\f{\g}\,\dif\g}^{\swl{n+1,1}}}{1-\CDF{\swl{1,1}}+\swl{n,1}}\\ & -\ln\left(\frac{1-\CDF{\swl{1,1}}+\swl{n,1}}{1-\CDF{\swl{1,1}}+\swl{n-1,1}}\right) \f{\swl{n,1}}\Biggr),
  \end{split}
  \label{eq:derivative}
\end{equation}
where $\ln(\cdot)$ is the natural logarithm. 
   The integral in~\eqref{eq:derivative} is recognized as the
difference between the cdf of
$\g$, $\CDF{\cdot}$, evaluated at the two points $\swl{n+1,1}$
and $\swl{n,1}$.  Setting $\frac{\partial\MASA_N}{\partial\swl{n,1}}$ for $2\leq n \leq N$ equal to zero then yields a set
of $N-1$ equations, each with a similar form to the one shown here:
\begin{multline}
  \CDF{\swl{n+1,1}} - \CDF{\swl{n,1}}
    - (1-\CDF{\swl{1,1}}+\swl{n,1}) \times \\
      \ln\left(\frac{1-\CDF{\swl{1,1}}+\swl{n,1}}{1-\CDF{\swl{1,1}}+\swl{n-1,1}}\right) \f{\swl{n,1}}
       = 0.
\label{eq:cdf-derivative}
\end{multline}
Noting that $\swl{n+1,1}$ appears only in one place in this equation, it
is trivial to rearrange the $N-2$ first equations into a recursive set
of equations where $\swl{n+1,1}$ is written as a function of $\swl{n,1}$,
$\swl{n-1,1}$, and $\swl{1,1}$ for $n=2,\dots,N-1$:
\begin{equation}
\begin{split}
  \swl{n+1,1} = &\ICDF
    \bigl[
      \CDF{\swl{n,1}}
      +(1-\CDF{\swl{1,1}}+\swl{n,1})\times \\
      & \ln\left(\frac{1-\CDF{\swl{1,1}}+\swl{n,1}}{1-\CDF{\swl{1,1}}+\swl{n-1,1}}\right) \f{\swl{n,1}}
    \bigr]
\end{split}
\label{eq: cdf-recursive}
\end{equation}
where $\ICDF[\cdot]$ is the inverse cdf\@.

For $N\geq 3$, \eqref{eq: cdf-recursive} can be expanded in order to yield a set $\swl{3,1},\cdots,\swl{N,1}$ which is optimal for given $\swl{1,1}$ and $\swl{2,1}$. The MASA can then be expressed as a function of $\swl{1,1}$ and $\swl{2,1}$ only.
We have now used $N-2$ equations from the set
in~\eqref{eq:MASA-gradient}, and the two remaining equations could be used in order
to reduce the problem to one equation of one unknown.  However,
both because of the recursion and the complicated expression for $\frac{\partial\MASA_N}{\partial\swl{1,1}}$,
the resulting equation would become prohibitively involved.
The final optimization is done by numerical maximization of $\MASA_N(\swl{1,1},\swl{2,1})$, thus reducing the $N$-dimensional optimization problem to $2$ dimensions. After solving the reduced problem $\swl{3,1},\cdots,\swl{N,1}$ are found via~\eqref{eq: cdf-recursive}.


\section{Numerical Results}
\label{sec: numerical-exampl-results}
One important outcome of the research presented here is the
opportunity the results provide for
assessing the relative significance of the number of codes and power levels used. It
is in many ways desirable to use as few
codes and power levels as possible in link adaptation schemes, as this may help overcome several problems, e.g.,
relating to implementation complexity, adaptation with faulty
channel state information (CSI),
 Thus, if we
can come close to the maximum MASA (i.e., the channel capacity) with
small values of $N$ and $K$ by choosing our link adaptation schemes optimally,
this is potentially of great practical interest.

The constant and discrete schemes offer several advantages considering
implementation~\cite{digham/alouini04}. In these schemes the transmitter
adapts its power and rate from a limited set of values, thus the receiver only
need to feed back an indexed rate and power pair for each fading block. Obviously,
compared to the feedback of continuous channel state information, this results
in reduced requirements of the feedback channel bandwidth and transmitter design. Further, completely discrete schemes are more resilient towards errors in channel estimation and prediction.

Two performance merits will be taken
into account: The MASA, representing an approachable upper bound on the
throughput when the scheme is under the restriction of a certain number
of codes and power adaptation flexibility, and the probability
of no transmission, $\Pnotr$.


\subsection{Switching Levels and Power Adaptation Schemes}
\label{sec: switch-levels-pow-ad-schemes}
\begin{figure}
  \centering
  \includegraphics[width=3.2in]{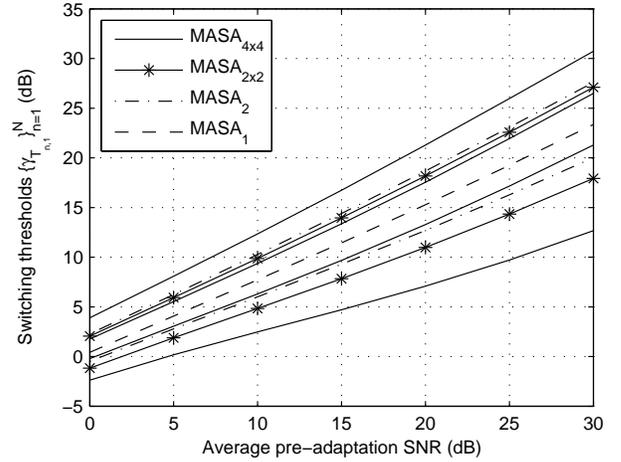}
  \caption{Switching thresholds $\{ \swl{n,1}\}_{n=1}^N$ as a function of average pre-adaptation SNR.
For each data series, the lowermost curve shows $\swl{1,1}$, while the uppermost shows $\swl{N,1}$.}
\label{fig: Thresholds_1_2_2x2_4x4}
  \label{fig: Switching Thresholds}
\end{figure}


Fig.~\ref{fig: Switching Thresholds} shows the set of optimal switching
levels $\{\swl{n,1}\}_{n=1}^{N}$ for
selected MASA schemes and for $0\,\dB<\gb<30\,\dB$.
(For the $\MASANKa{2}{2}$ and $\MASANKa{4}{4}$ schemes the internal
switching thresholds $\lbrace \swl{n,k} \rbrace_{n=1, k=2}^{N,K}$ are not shown in
Fig.~\ref{fig: Switching Thresholds} due to clarity reasons).
Table~\ref{tab: rate example} shows numerical values,
correct to the first decimal place, for designing optimal systems
with $N=4$ at $\gb=10\,\dB$.
Fig.~\ref{fig: Switching Thresholds} and Table~\ref{tab: rate example}
should be interpreted as follows:  With the mean pre-adaptation SNR $\gb$,
the number of codes $N$ and a power adaptation scheme in
mind, find the set of switching levels and the corresponding
maximal spectral efficiencies, given by
\begin{equation}
\label{eq: Spectral efficiencies} \text{SE}_n=
\begin{cases}\text{log}_2(1+\frac{\swl{n,1}}{1-F(\swl{1,1})})\quad \text{for MASA}_N,\\
\text{log}_2(1+\kappa_n)\quad \text{for MASA}_{N\times K}\quad \text{and}\quad \MASANI .
\end{cases}
\end{equation}
Then design optimal codes for these spectral efficiencies, for each $\gb$ of
interest.


\begin{table*}
\renewcommand{\arraystretch}{1.3}
\caption{Rate and {P}ower {A}daptation for {F}our {R}egions,
$\gb=10 \text{ }\mathrm{dB}$} \label{tab: rate example}
\centering
\begin{tabular}{l|c|c|c}
\hline & MASA$_4$ & MASA$_{4\times 4}$ & MASA$_{4\times \infty}$\\
\hline
$\swl{1,1},\cdots, \swl{4,1}\, (\dB)$ & $4.4, 7.3, 9.8, 12.4$ & $2.5, 6.3, 9.4, 12.3$
&$1.4, 5.5, 8.9, 12.3$\\
$\kappa_1,\cdots,\kappa_4$ & - & $2.4, 6.6, 13.9, 29.0$ & $2.0, 6.0, 13.8, 31.3$\\
SE$_1,\cdots,$SE$_4$ & $1.9, 2.7, 3.4, 4.2$ &$1.8, 2.9, 3.9, 4.9$ &$1.6, 2.8, 3.9, 5.0$\\
\hline
\end{tabular}
\end{table*}



Examples of optimized power adaptation schemes are shown in
Fig.~\ref{fig: Power_scheme_C_opra_4xinf_4x4_10_db}, illustrating
the piecewise channel inversion power adaptation schemes of the
$\MASANK$ and $\MASANI$ schemes. For $\g \leq 15\,\dB$ the
discrete-power scheme of $\MASANKa{4}{4}$ closely follows the
continuous power adaptation scheme of $\MASANIa{4}$. Fig.~\ref{fig:
Power_scheme_C_opra_4xinf_4x4_10_db} also depicts the optimal power
allocation (denoted $C_\text{OPRA}$) for continuous-rate
adaptation~\cite[Eq. 5]{goldsmith/varaiya97}. At $\gb=10\,\dB$, two
discrete-rate MASA schemes  allocate more power to codes with higher
spectral efficiency, following the water-filling nature of
$C_\text{OPRA}$. In the analysis of Section~\ref{sec:
optimal-design-masa} no stringent peak power constraint has been
imposed, and it is interesting to note the limited range of $S(\g)$
that still occurs for both $\MASANKa{4}{4}$ and $\MASANIa{4}$.
\begin{figure}
  \centering
  \includegraphics[width=3.2in]{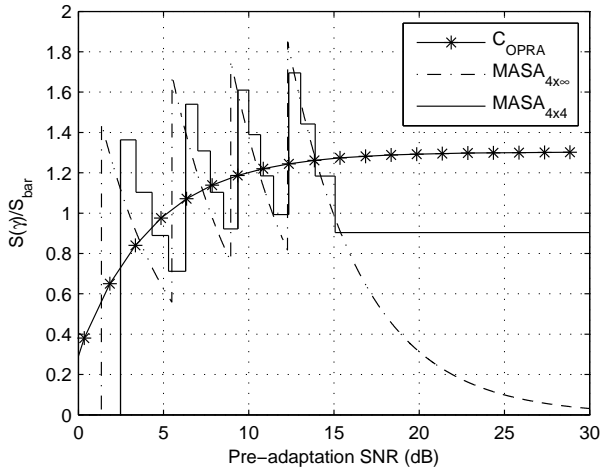}
  \caption{Power adaptation schemes for MASA$_{4\times \infty}$ and MASA$_{4\times 4}$
  as a function of pre-adaptation SNR, plotted for an average pre-adaptation SNR $\gb=10$ dB.
  Optimal power adaptation for continuous-rate adaptation $C_\text{OPRA}$ as reference.}
  \label{fig: Power_scheme_C_opra_4xinf_4x4_10_db}
\end{figure}

\subsection{Comparison of MASA Schemes}
\label{sec: MASA-comp}
Under the average power constraint of~\eqref{eq: Power constraint for MASA - Start} the
average spectral efficiencies corresponding to MASA$_N$, $\MASANK$, and $\MASANI$
are plotted in Figs.~\ref{fig: MASA ASE Plots} and~\ref{fig: MASA 4 codes}.
From Fig.~\ref{fig: MASA_N C_ORA} we see that the average spectral efficiency
increases with the number of codes, while Fig.~\ref{fig: MASA 4 codes} shows that the
ASE also increases with flexibility of power adaptation.


Fig.~\ref{fig: MASA_disc_k_VS_MASA_disc_k} compares four MASA
schemes with the product $N\times K=8$, showing that number of codes has a
slightly larger impact on the spectral efficiency than the number of power levels. However, we see that
the three schemes with $N\geq 2$ have almost similar performance, indicating that the number of rates and
power levels can be traded against each other, while still achieving
approximately the same ASE. From an implementation point of view
this is valuable as it gives more freedom to design the
system.

Finally, as mentioned in the introduction, there are at least two distinct design philosophies for link adaptation systems, depending on whether the number of \textit{regions} in the partition of the pre-adaptation range $\g$ or the number of \textit{rates} is the starting point of the design, and correspondingly on whether information \textit{outage} can be tolerated. Now, a direct comparison is not possible, but to highlight the differences between the two philosophies we provide a numerical example.
\begin{example}
\label{ex: outage prob from yates}
Consider designing a simple rate-adaptive system with two regions, where the goal is to maximize the expected rate using a single power level per region. Assuming the average pre-adaptation SNR on the channel to be $5\,\dB$ and following the setup of~\cite{lin/yates/spasojevic03, kim/skoglund05, kim/skoglund07}, 
we find the maximum \textit{average reliable throughput} (ART), defined as the ``average data rate assuming zero rate
when the channel is in outage''\cite{lin/yates/spasojevic03} that can be achieved to be $1.2444$ bits/s/Hz, and that the probability of information outage, or equivalent the probability that an arbitrary transmission will be corrupted, is $0.3098$. Thus, without retransmissions, the system is likely to be useless for many applications.

Now, turning to the MASA schemes discussed in this paper, using two regions, but only one constellation and power level, i.e., $\MASA_1$, we see from Fig.~\ref{fig: MASA_N C_ORA} that this scheme achieves a spectral efficiency of $1.2263$ bits/s/Hz at $\gb=5\,\dB$ without outage. This is only marginally less than the scheme from~\cite{lin/yates/spasojevic03, kim/skoglund05, kim/skoglund07} when using two constellations and allowing for a non-zero outage. 
\end{example}

\begin{figure}
  \centerline{
    \subfigure[Average spectral efficiency of MASA$_N$ for $N=1,2,4,8$ and $C_{\text{ORA}}$ for reference.]{
      \includegraphics[width=0.48\linewidth]{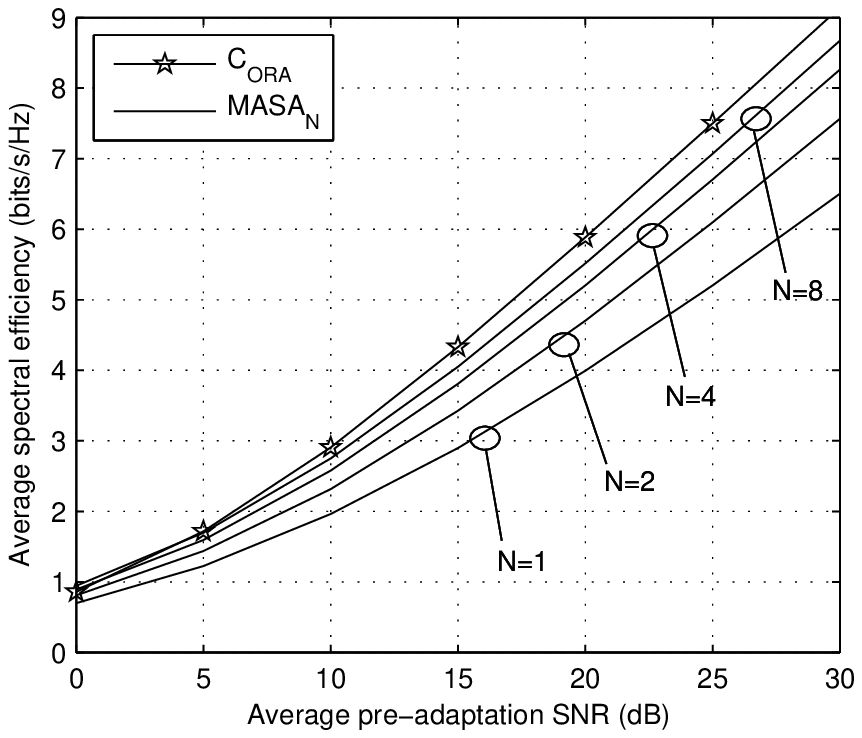}
     \label{fig: MASA_N C_ORA}}
    \hfil
    \subfigure[Average spectral efficiency of $\MASANK$ as a function of $\gb$,
                         for four MASA schemes with $N\times K=8$.]{
      \includegraphics[width=0.48\linewidth]{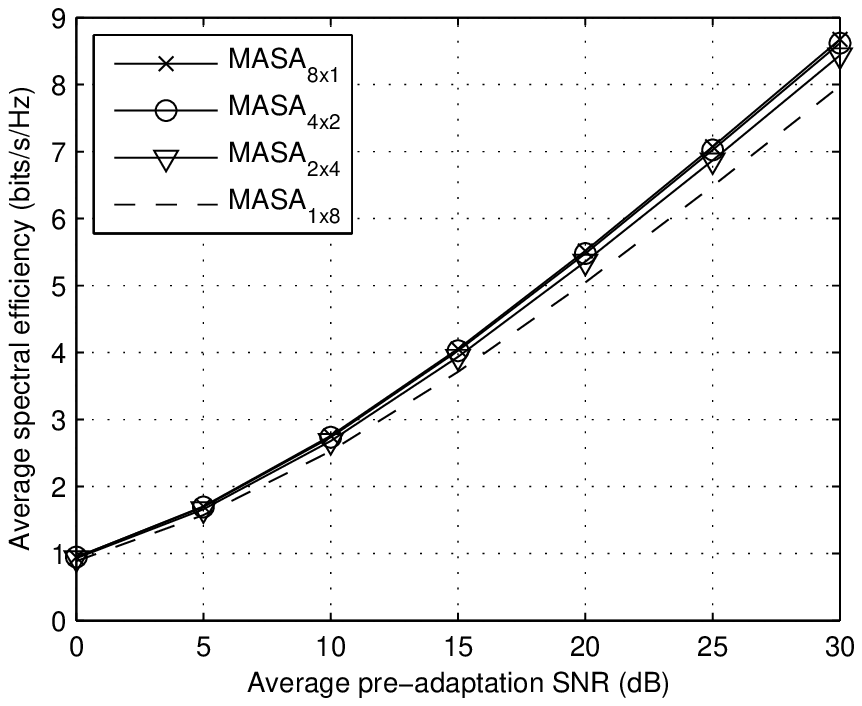}
     \label{fig: MASA_disc_k_VS_MASA_disc_k}}}
   \caption{Average spectral efficiency of different MASA schemes.}
  \label{fig: MASA ASE Plots}
\end{figure}

\begin{figure}
  \centering
   \includegraphics[width=3.2in]{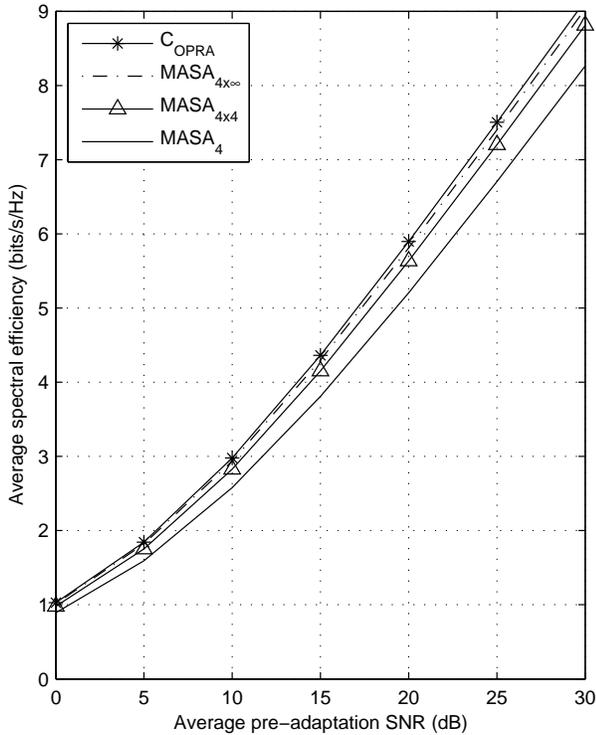}
   \caption{Average spectral efficiency for various MASA schemes with $N=4$ codes as a function of $\gb$.
   $C_{\text{OPRA}}$ as reference.}
  \label{fig: MASA 4 codes}
\end{figure}



\subsection{Comparison of MASA schemes with Shannon Capacities}
\label{sec: comp-with-capac}
Assume that the channel state information $\g$ is known to the
transmitter and the receiver. Then, given an average transmit
power constraint the channel
capacity of a Rayleigh fading channel with optimal \emph{continuous}
rate adaptation and constant transmit power, $C_{\text{ORA}}$, is
given in~\cite{goldsmith/varaiya97,alouini/goldsmith99} as

\begin{equation}
\label{eq:C_ORA} C_{\text{ORA}} =
\text{log}_2(e)e^{\frac{1}{\gb}}
E_1\Bigl(\frac{1}{\gb}\Bigr),
\end{equation}
where $E_1(\cdot)$ is the exponential integral of first order~\cite[p.~xxxv]{gradshteyn/ryzhik00}.
Furthermore, if we include \emph{continuous} power adaptation, the
channel capacity, $C_{\text{OPRA}}$,
becomes~\cite{goldsmith/varaiya97,alouini/goldsmith99}
\begin{equation}
\label{eq: C_OPRA} C_{\text{OPRA}} = \text{log}_2(e)
\Bigl(\frac{e^{\frac{-\g_\text{cut}}{\gb}}}{\frac{\g_\text{cut}}{\gb}}-\gb\Bigr),
\end{equation}
where the ``cutoff'' value $\g_\text{cut}$ can be found by solving
\begin{equation}
\int_{\g_\text{cut}}^{\infty}\Bigl(\frac{1}{\g_\text{cut}}-\frac{1}{\gamma}\Bigr)f_{\g}(\g)
\, \dif \g=1.
\end{equation}
Thus, MASA$_N$ is compared to $C_{\text{ORA}}$, while
$\text{MASA}_{N\times K}$ and $\text{MASA}_{N\times \infty}$ are measured against
$C_{\text{OPRA}}$.
The capacity in \eqref{eq: C_OPRA} can be achieved in the case
that a continuum of capacity-achieving codes for AWGN channels, and
corresponding optimal power levels, are available. That is, for each
SNR there exists an optimal code and power level. Alternatively, if the fading is ergodic
within each codeword, as opposed to the assumptions in this paper, $C_{\text{OPRA}}$ can be obtained by a fixed rate transmission system using
a single Gaussian code~\cite{caire/shamai99, biglieri/proakis/shamai98}.

As the number of codes (switching thresholds) goes to infinity,
MASA$_N$ will reach the $C_\text{ORA}$ capacity,
while $\text{MASA}_{N\times K}$ will reach the $C_\text{OPRA}$
capacity when $N,K\to \infty$. Of course this is not a
practically feasible approach; however, as illustrated in
Figs.~\ref{fig: MASA_N C_ORA} and~\ref{fig: MASA 4 codes},
a small number of optimally designed codes, and possibly power adaptation levels, will indeed yield
a performance that is close to the theoretical upper bounds,  $C_\text{ORA}$ and $C_{\text{OPRA}}$,
for any given $\gb$.

From Fig.~\ref{fig: MASA 4 codes} we see
that the power adapted MASA schemes perform close to the theoretical
upper bound ($C_\text{OPRA}$) using only four codes. Specifically,
restricting our adaptive policy to just four rates and four power
levels per rate results in a spectral efficiency that is within $1$ dB
of the efficiency obtained with continuous-rate and
continuous-power~\eqref{eq: C_OPRA}, demonstrating the remarkable impact of
power adaptation.
This is in contrast to the case of continuous
rate adaptation, where introducing power adaptation gives negligible
gain~\cite{goldsmith/varaiya97}.

\subsection{Probability of no transmission}
\label{sec: outage-probability}
When the pre-adaptation SNR falls below $\swl{1,1}$ no data are sent. The probability of no transmission $\Pnotr$ for the Rayleigh fading case can then be calculated as follows,
\begin{equation}
  \Pnotr = \int_{0}^{\swl{1,1}}f_\g(\g)\, \dif\g
  = 1 - e^{-\frac{\swl{1,1}}{\gb}}
  \text{.}
  \label{eq:Po}
\end{equation}
When the number of codes is increased, the SNR range will be
partitioned into a larger number of regions.  As shown in
Fig.~\ref{fig: Switching Thresholds}, the
lowest switching level $\swl{1,1}$ will then become smaller.  $\Pnotr$ will
therefore decrease, as illustrated in Fig.~\ref{fig: P_out}. Similarly,
as seen from Fig.~\ref{fig: Switching Thresholds}, $\swl{1,1}$ also decreases with an increasing number of power levels, when
$N$ is constant. Thus, both rate and power adaptation flexibility reduces the probability of no transmission.

\begin{figure}
\centering
\includegraphics[width=3.2in]{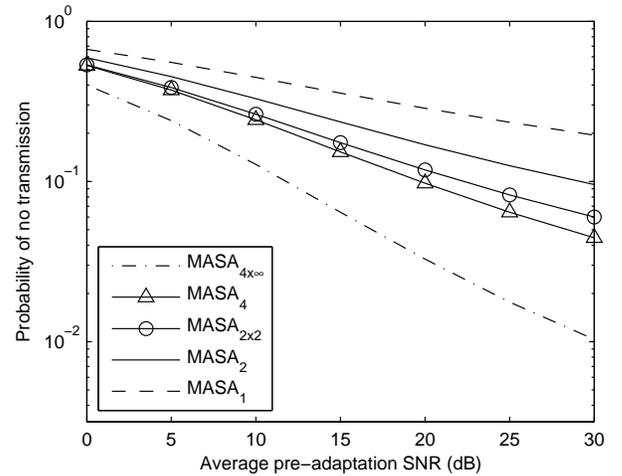}
\caption{The probability of no transmission $\Pnotr$ as a function of average pre-adaptation SNR.}
\label{fig: P_out}
\end{figure}


For applications with low delay requirements, it could be beneficial to enforce a constraint that $\Pnotr$ should not exceed
a prescribed maximal value. Then, we may simply---using~\eqref{eq:Po}---compute $\swl{1,1}$ to be the highest SNR
value which ensures that this constraint is fulfilled.
The MASA schemes are then optimized to obtain the highest possible ASE under the
given constraint on no transmission, i.e., optimization with $\swl{1,1}$ as a pre-determined parameter.
As an example, in Fig.~\ref{fig: MASA_Po}, the obtainable average spectral efficiency for the MASA$_N$ scheme with the
additional constraint that $\Pnotr \leq 10^{-3}$ (dashed lines) is compared to the
case without a constraint on no transmission probability (solid lines). We see that for $N=2$ the constraint
has a severe influence on the ASE, while for $N=8$ the constraint can be fulfilled without significant
losses in spectral efficiency.
\begin{figure}
  \centering
  \includegraphics{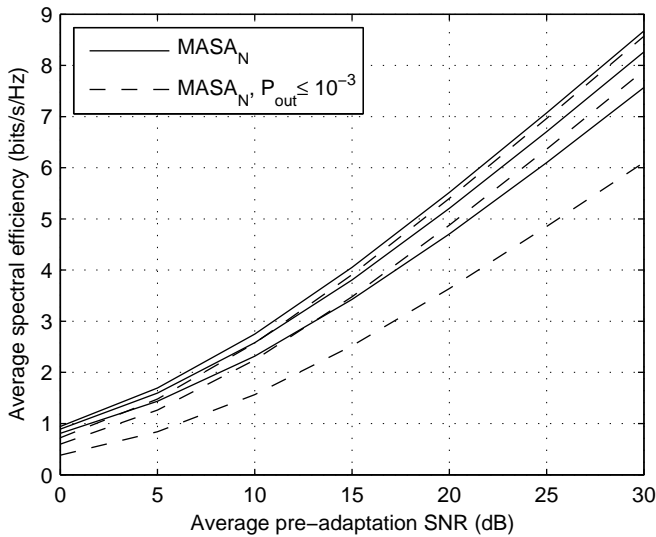}
  \caption{MASA$_N$ as a function of $\gb$, with a constraint on the probability of no transmission (solid lines) and without (dashed lines).
  Plotted for $N=2$ (lowermost curve for both series), $4$, and $8$ (uppermost curve for both series ).}
  \label{fig: MASA_Po}
\end{figure}


\section{Conclusions and Discussions}
\label{sec: conclusions}
Using a zero information outage approach, and assuming that capacity-achieving component codes are available, we have devised spectral efficiency maximizing link adaptation schemes
for flat block-fading wireless communication channels.
Constant, discrete, and continuous-power adaptation schemes are proposed
and analyzed. Switching levels and power adaptation policies are optimized in order to maximize the average spectral efficiency for a given fading distribution.

We have shown that a performance close to the Shannon limits can be achieved
with all schemes using only a small number of codes.
However, utilizing power adaptation is shown to give significant average spectral efficiency and probability of no transmission gains over the constant transmission power scheme.
In particular, using a fully discrete scheme with just four codes, each associated with four power levels,
we achieve a spectral efficiency within $1\,\dB$ of the Shannon
capacity for continuous rate and power adaptation. Additionally, constant and discrete-power
adaptation schemes render the system more robust against imperfect channel estimation and prediction, reduce
the feedback load and resolve implementation issues, compared to continuous power adaptation.

We have also seen that the number of rates $N$ can be traded against the number of power levels $K$.
This flexibility is of practical importance since it may be easier to implement the
proposed power adaptation schemes than to design capacity-achieving
codes for a large number of rates.
The analysis can be augmented to encompass more practical scenarios, e.g., by taking imperfect CSI~\cite{jetlund/oien/holm/hole04}
and SNR margins due to various implementation losses, into account. Finally, we note that the adaptive power algorithms presented in this paper
require that the radio frequency (RF) power amplifier is operated in the linear region, implying a higher power consumption. For devices with limited battery capacity it is apparent that there will be a tradeoff between efficiency and linearity. This can be a topic for further research.

\section*{Acknowledgement}
\label{sec:acknowledgement}
The authors wish to express their gratitude to Professor Tom Luo,
University of Minnesota, for suggesting the modified optimization
when the first switching level is constrained due to requirements on the probability of no transmission.
A similar idea has independently been proposed by Dr. Ola Jetlund, NTNU~\cite{jetlund/oien/holm/hole04}.



\bibliography{../../../referanser/referanser}

\begin{thebibliography}{10}
\providecommand{\url}[1]{#1}
\csname url@rmstyle\endcsname
\providecommand{\newblock}{\relax}
\providecommand{\bibinfo}[2]{#2}
\providecommand\BIBentrySTDinterwordspacing{\spaceskip=0pt\relax}
\providecommand\BIBentryALTinterwordstretchfactor{4}
\providecommand\BIBentryALTinterwordspacing{\spaceskip=\fontdimen2\font plus
\BIBentryALTinterwordstretchfactor\fontdimen3\font minus
  \fontdimen4\font\relax}
\providecommand\BIBforeignlanguage[2]{{%
\expandafter\ifx\csname l@#1\endcsname\relax
\typeout{** WARNING: IEEEtran.bst: No hyphenation pattern has been}%
\typeout{** loaded for the language `#1'. Using the pattern for}%
\typeout{** the default language instead.}%
\else
\language=\csname l@#1\endcsname
\fi
#2}}

\bibitem{DVBS2}
\BIBentryALTinterwordspacing
ETSI. Digital {V}ideo {B}roadcasting - {S}atellite {V}ersion 2. [Online].
  Available: \url{http://www.dvb.org.}
\BIBentrySTDinterwordspacing

\bibitem{goldsmith/chua97}
A.~J. Goldsmith and S.-G. Chua, ``Variable-rate variable-power {M}-{Q}{A}{M}
  for fading channels,'' \emph{IEEE Transactions on Communications}, vol.~45,
  no.~10, pp. 1218--1230, Oct. 1997.

\bibitem{goldsmith/chua98}
------, ``Adaptive coded modulation for fading channels,'' \emph{IEEE
  Transactions on Communications}, vol.~46, no.~5, pp. 595--602, May 1998.

\bibitem{hole/holm/oien00}
K.~J. Hole, H.~Holm, and G.~E. {\O}ien, ``Adaptive multidimensional coded
  modulation over flat fading channels,'' \emph{IEEE Journal on Selected Areas
  in Communications}, vol.~18, no.~7, pp. 1153--1158, July 2000.

\bibitem{hole/oien01}
K.~J. Hole and G.~E. {\O}ien, ``Spectral efficiency of adaptive coded
  modulation in urban microcellular networks,'' \emph{IEEE Transactions on
  Vehicular Technology}, vol.~50, no.~1, pp. 205--222, Jan. 2001.

\bibitem{chung/goldsmith01}
S.~T. Chung and A.~J. Goldsmith, ``Degrees of freedom in adaptive modulation: A
  unified view,'' \emph{IEEE Transactions on Communications}, vol.~49, no.~9,
  pp. 1561--1571, Sep. 2001.

\bibitem{hanzo/wong/yee02}
L.~Hanzo, C.~H. Wong, and M.-S. Yee, \emph{Adaptive Wireless Transceivers:
  Turbo-Coded, Turbo-Equalized and Space-Time Coded TDMA, CDMA and OFDM
  Systems}.\hskip 1em plus 0.5em minus 0.4em\relax John Wiley \& Sons, Inc.,
  2002.

\bibitem{holm02}
\BIBentryALTinterwordspacing
H.~Holm, ``Adaptive coded modulation performance and channel estimation tools
  for flat fading channels,'' Ph.D. dissertation, Norwegian University of
  Science and Technology, 2002. [Online]. Available:
  \url{http://www.iet.ntnu.no/projects/beats/}
\BIBentrySTDinterwordspacing

\bibitem{catreux/erceg/gesbert/heath02}
S.~Catreux, V.~Erceg, D.~Gesbert, and R.~W. {Heath, Jr.}, ``Adaptive modulation
  and {MIMO} coding for broadband wireless data networks,'' \emph{IEEE
  Transactions on Communications}, vol.~40, no.~6, pp. 108--115, June 2002.

\bibitem{torrance/hanzo96}
J.~Torrance and L.~Hanzo, ``Optimisation of switching levels for adaptive
  modulation in a slow {R}ayleigh fading channel,'' \emph{Electronics Letters},
  vol.~32, pp. 1167--1169, June 1996.

\bibitem{mceliece/stark84}
R.~J. McEliece and W.~E. Stark, ``Channels with block interference,''
  \emph{IEEE Transactions on Information Theory}, vol. IT-30, no.~1, pp.
  44--53, Jan. 1984.

\bibitem{ozarow/shamai/wyner94}
L.~H. Ozarow, S.~{Shamai (Shitz)}, and A.~D. Wyner, ``Information theoretic
  considerations for cellular mobile radio,'' \emph{IEEE Transactions on
  Vehicular Technology}, vol.~43, no.~2, pp. 359--378, May 1994.

\bibitem{kose/goeckel00}
C.~K\"{o}se and D.~L. Goeckel, ``On power adaptation in adaptive signalling
  systems,'' \emph{IEEE Transactions on Communications}, vol.~48, no.~11, pp.
  1769--1773, Nov. 2000.

\bibitem{paris/aguayo-torres/entrambasaguas01}
J.~F. Paris, M.~del Carmen Aguayo-Torres, and J.~T. Entrambasaguas, ``Optimum
  discrete-power adaptive {QAM} scheme for {R}ayleigh fading channels,''
  \emph{IEEE Communication Letters}, vol.~5, no.~1, pp. 281--283, July 2001.

\bibitem{byoungjo/hanzo03}
C.~Byoungjo and L.~Hanzo, ``Optimum mode-switching-assisted constant power
  single- and multicarrier adaptive modulation,'' \emph{IEEE Transactions on
  Vehicular Technology}, pp. 536--560, 2003.

\bibitem{lin/yates/spasojevic03}
L.~Lin, R.~D. Yates, and P.~Spasojevi\'{c}, ``Adpative transmission with
  discrete code rates and power levels,'' \emph{IEEE Transactions on
  Communications}, vol.~51, no.~12, pp. 2115--2125, Dec. 2003.

\bibitem{kim/skoglund05}
T.~T. Kim and M.~Skoglund, ``On the expected rate of slowly fading channels
  with quantized side information,'' in \emph{Proc. Asilomar Conference on
  Signals, Systems and Computers}, Pacific Grove, CA, 2005, pp. 633--637.

\bibitem{kim/skoglund07}
------, ``On the expected rate of slowly fading channels with quantized side
  information,'' \emph{IEEE Transactions on Communications}, vol.~8, no.~4, pp.
  820--829, April 2007.

\bibitem{goldsmith05}
A.~Goldsmith, \emph{Wireless Communications}.\hskip 1em plus 0.5em minus
  0.4em\relax New York: Cambridge University Press, 2005.

\bibitem{goldsmith/varaiya97}
A.~J. Goldsmith and P.~P. Varaiya, ``Capacity of fading channels with channel
  side information,'' \emph{IEEE Transactions on Information Theory}, vol.~43,
  no.~6, pp. 1986--1992, Nov. 1997.

\bibitem{gjendemsjo/oien/orten06}
A.~Gjendemsj{\o}, G.~E. {\O}ien, and P.~Orten, ``Optimal discrete-level power
  control for adaptive coded modulation schemes with capacity-approaching
  component codes,'' in \emph{Proc. IEEE International Conference on
  Communications}, Istanbul, Turkey, June 2006, pp. 5047--5052.

\bibitem{caire/shamai99}
G.~Caire and S.~{Shamai (Shitz)}, ``On the capacity of some channels with
  channel state information,'' \emph{IEEE Transactions on Information Theory},
  vol.~45, no.~6, pp. 2007--2019, Sept. 1999.

\bibitem{dolinar/divsalar/pollara98-I}
\BIBentryALTinterwordspacing
S.~Dolinar, D.~Divsalar, and F.~Pollara, ``Code performance as a function of
  code block size,'' in \emph{JPL TDA Progress Report 42-133}, 1998. [Online].
  Available: \url{http://tmo.jpl.nasa.gov/tmo/progress_report/42-133/133K.pdf}
\BIBentrySTDinterwordspacing

\bibitem{dolinar/divsalar/pollara98-II}
------, ``Turbo code performance as a function of code block size,'' in
  \emph{Proc. IEEE International Symposium on Information Theory}, Cambridge,
  MA, Aug. 1998, p.~32.

\bibitem{gjendemsjo/oien/holm05}
A.~Gjendemsj{\o}, G.~E. {\O}ien, and H.~Holm, ``Optimal power control for
  discrete-rate link adaptation schemes with capacity-approaching coding,'' in
  \emph{Proc. IEEE Global Telecommunications Conference}, St. Louis, MO,
  Nov.-Dec. 2005, pp. 3498--3502.

\bibitem{lin/yates/spasojevic06}
L.~Lin, R.~D. Yates, and P.~Spasojevi\'{c}, ``Adpative transmission with
  discrete code rates and power levels,'' \emph{IEEE Transactions on
  Information Theory}, vol.~52, no.~5, pp. 1847--1860, May 2006.

\bibitem{cover/thomas91}
T.~M. Cover and J.~A. Thomas, \emph{Elements of Information Theory}.\hskip 1em
  plus 0.5em minus 0.4em\relax New York: John Wiley \& Sons, 1991.

\bibitem{3GphysTDD}
\emph{Physical Layer Procedures (TDD)}, Third Generation Partnership Project,
  Technical Specification Group Radio Access Network Std., Rev. TS25.224
  (Release 7), March 2006.

\bibitem{digham/alouini04}
F.~F. Digham and M.-S. Alouini, ``Diversity combining with discrete power
  loading over fading channels,'' in \emph{Proc. IEEE Wireless Communications
  and Networking Conference}, Atlanta, GA, March 2004, pp. 328--332.

\bibitem{alouini/goldsmith99}
M.-S. Alouini and A.~J. Goldsmith, ``Capacity of {R}ayleigh fading channels
  under different adaptive transmission and diversity-combining techniques,''
  \emph{IEEE Transactions on Vehicular Technology}, vol.~48, no.~4, pp.
  1165--1181, July 1999.

\bibitem{gradshteyn/ryzhik00}
I.~Gradshteyn and I.~Ryzhik, \emph{Table of Integrals, Series and Products},
  6th~ed.\hskip 1em plus 0.5em minus 0.4em\relax San Diego: Academic Press,
  2000.

\bibitem{biglieri/proakis/shamai98}
E.~Biglieri, J.~Proakis, and S.~{Shamai (Shitz)}, ``Fading channels:
  Information-theoretic and communications aspects,'' \emph{IEEE Transactions
  on Information Theory}, vol.~44, no.~6, pp. 2619--2692, Oct. 1998.

\bibitem{jetlund/oien/holm/hole04}
O.~Jetlund, G.~E. {\O}ien, H.~Holm, and K.~J. Hole, ``Spectral efficiency
  bounds for adaptive coded modulation with outage probability constraints and
  imperfect channel prediction,'' in \emph{Nordic Radio Symposium}, Oulu,
  Finland, Aug. 2004.

\end{thebibliography}


\end{document}